\colorlet{shadecolor}{yellow}
\definecolor{myblue}{rgb}{0.0, 0.5, 1.0}
\definecolor{myred}{rgb}{1.0, 0.13, 0.32}
\definecolor{mygreen}{rgb}{0.31, 0.68, 0.07}
\def\BibTeX{{\rm B\kern-.05em{\sc i\kern-.025em b}\kern-.08em
    T\kern-.1667em\lower.7ex\hbox{E}\kern-.125emX}}
\newtheorem{theorem}{Theorem}
\newtheorem{proposition}[theorem]{Lemma}
\begin{document}

\title{ {\huge Optimal Beamforming Design for ISAC with Sensor-Aided Active RIS}
\thanks{This publication has emanated from research conducted with the financial support of Science Foundation Ireland under Grant Number 13/RC/2077\_P2.} 
}
\newgeometry {top=25.4mm,left=19.1mm, right= 19.1mm,bottom =23mm}
\author{Ahmed Magbool$^*$, Vaibhav Kumar$^\dagger$, and Mark F. Flanagan$^*$ \\
$^*$ School of Electrical and Electronic Engineering, University College Dublin, Belfield, Dublin 4, Ireland \\
$^\dagger$ Engineering Division, New York University Abu Dhabi, UAE \\
Email: ahmed.magbool@ucdconnect.ie, vaibhav.kumar@ieee.org, mark.flanagan@ieee.org }
\maketitle
\begin{abstract}
Active reconfigurable intelligent surfaces (RISs) can improve the performance of integrated sensing and communication (ISAC), and therefore enable simultaneous data transmission and target sensing. However, when the line-of-sight (LoS) link between the base station and the sensing target is blocked, the sensing signals suffer from severe path loss, resulting in an inferior sensing performance. To address this issue, this paper employs a sensor-aided active RIS to enhance ISAC system performance. The goal is to maximize the signal-to-noise ratio of the echo signal from the target at the sensor-array while meeting constraints on communication signal quality, power budgets, and RIS amplification limits. The optimization problem is challenging due to its non-convex nature and the coupling between the optimization variables. We propose a closed-form solution for receive beamforming, and a successive convex approximation based iterative method for transmit  and reflection beamforming design. Simulation results demonstrate the advantage of the proposed sensor-aided active RIS-assisted system model over its non-sensor-aided counterpart.
\end{abstract}
\begin{IEEEkeywords}
 Integrated sensing and communication, reconfigurable intelligent surfaces, generalized Rayleigh quotient, successive convex approximation. 
\end{IEEEkeywords}
\IEEEpeerreviewmaketitle
\section{Introduction}
Several applications of the sixth-generation (6G) of cellular wireless communication, such as autonomous driving, healthcare monitoring, precision agriculture, and internet-of-things (IoT), require both accurate positioning and high data rates, leading to interest in integrated sensing and communication (ISAC), which enables simultaneous communication and sensing in a shared spectrum using the same hardware~\cite{2021_Cui,2022_Liu}. Reconfigurable intelligent surfaces (RISs) enhance ISAC by improving signal propagation and balancing communication and sensing functions~\cite{2024_Magbool}. However, in an RIS-aided mono-static ISAC system, when the line-of-sight (LoS) path between the transmitter and target is blocked, sensing performance degrades due to quadruple path loss, weakening the echo signal~\cite{2019_Basar}.

To mitigate the adverse impact of this high path loss on the sensing performance, various studies have explored using an active RIS to strengthen echo signals. The authors of~\cite{2023_Yu2} and~\cite{2023_Zhu} designed the beamformers at the base station (BS) and RIS to maximize the sensing signal-to-interference-plus-noise ratio (SINR) while meeting communication quality of service (QoS) constraints. They employed a majorization-minimization (MM) algorithm combined with a semidefinite relaxation (SDR)-based approach to solve this optimization problem. The target illumination power for a terahertz (THz) system with delay alignment modulation was maximized in~\cite{2023_Hao} using quadratically constrained quadratic programming and the SDR method. In~\cite{2023_Li}, the MM and SDR approaches were used to optimize the beampattern toward the target. The MM and SDR methods were also used in~\cite{2024_Zhu} for parameter estimation by minimizing the Cramér-Rao bound (CRB). The authors of~\cite{2023_Chen4} maximized the communication sum rate using the weighted minimum mean-square error (WMMSE) criterion while guaranteeing a minimum communication performance. The security aspects of a RIS-aided ISAC system was examined in~\cite{2023_Salem} by maximizing the secrecy rate with a radar detection SNR constraint. However, the current beamforming design solutions for active-RIS-aided ISAC systems still face two major challenges: (i) the issue of quadruple path loss persists in active-RIS-aided systems, although its effect is less severe compared to passive-RIS systems; and (ii) the SINR at the BS becomes a bi-quadratic function of the RIS beamforming vector, significantly increasing the computational complexity of the beamforming design, especially for mid-to-large-sized RIS, making it difficult to manage.

To address these limitations, recent works have considered embedding radar sensors with passive RIS elements, known as sensor-aided or semi-passive RIS, for applications such as channel reconstruction~\cite{2023_Hu}, ISAC with simultaneously transmitting and reflecting RISs (STAR-RISs)~\cite{2023_Zhang6}, and beyond-diagonal RISs~\cite{2023_Wang3}. However, sensor-aided active RIS-assisted ISAC remains unexplored in the literature. Motivated by the active RIS’s excellent ability to mitigate the harmful effects of multiplicative fading, in this paper we explore the use of a sensor-aided active RIS for ISAC systems. Specifically, the signal amplification provided by the active RIS enhances the strength of the sensing signals, while the sensors embedded in the RIS eliminate the need for the echo signal to travel back to the base station. This approach significantly improves the overall performance of the ISAC system.

In particular, we formulate an optimization problem targeting the maximization of the received radar SNR, while guaranteeing a given communication QoS for every communication user. The optimization problem is highly challenging due to its non-convex nature and strong coupling between optimization variables. To solve this problem, we obtain the optimal receive beamforming design in closed-form, and then use a successive convex approximation (SCA)-based iterative algorithm to obtain the optimal transmit and reflective beamforming design. Simulation results demonstrate that the proposed sensor-aided active RIS system can significantly enhance radar SNR compared to the system with a non-sensor-aided active RIS.

\textit{Notations:} Bold lowercase and uppercase letters denote vectors and matrices, respectively. $\Re \{ \cdot \}$, $\Im \{ \cdot \}$, $|\cdot|$ and $(\cdot)^*$ represent the real part, the imaginary part, the magnitude and the complex conjugate of a complex matrix, respectively. $\|\mathbf{\cdot} \|_2$ and $\|\mathbf{\cdot} \|_\textsc{F}$ represent the Euclidean vector norm and the Frobenius matrix norm, respectively. $\mathbf{(\cdot)}^\mathsf{T}$, $\mathbf{(\cdot)}^\mathsf{H}$ and $\mathbf{(\cdot)}^{-1}$ denote the matrix transpose, matrix conjugate transpose, and matrix inverse, respectively. $\mathbf{I}_a$ represents the $a \times a$ identity matrix, while $\mathbf{0}_{a\times b}$ denotes a matrix of size $a \times b$ whose elements are all equal to zero. $\text{diag}(\mathbf{a})$ denotes a diagonal matrix with the elements of the vector $\mathbf{a}$ on the main diagonal. For a diagonal matrix $\mathbf{A}$, $\text{diag}(\mathbf{A})$ represents a vector whose entries consist of the diagonal elements of $\mathbf{A}$. $\mathbb{C}$ indicates the set of complex numbers and $j = \sqrt{-1}$. $\mathbb{E}\{ \cdot \}$ represents the expectation operator. Finally, $\mathcal{CN}(\mathbf{0},\mathbf{B})$ represents a complex Gaussian random distribution with a mean $\mathbf{0}$ and covariance matrix $\mathbf{B}$.

\section{System Model} \label{sec:sys_model}
\begin{figure}
         \centering 
         \includegraphics[width=0.65\columnwidth]{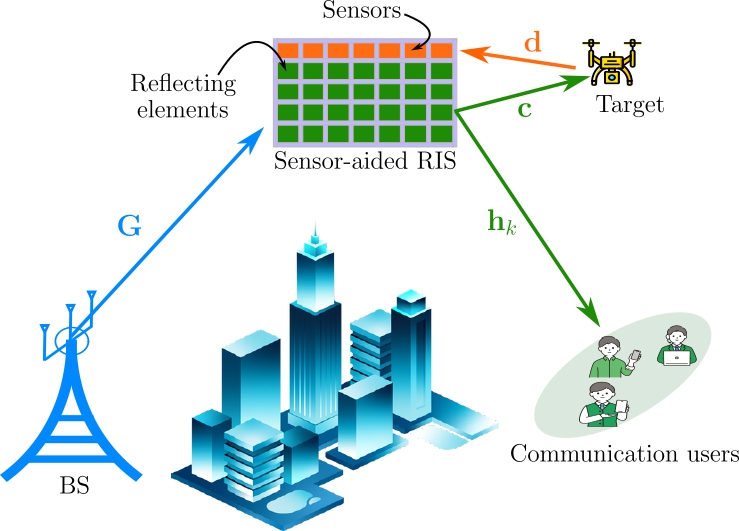} 
        \caption{System model for the proposed ISAC system.}
        \label{fig:sys_model}
\end{figure}
We consider a downlink ISAC system with a BS comprising $M$ antennas communicating with $K$ single-antenna users while simultaneously sensing a passive target, as shown in Fig.~\ref{fig:sys_model}. We assume that the direct paths between the BS and the communication users, and between the BS and the target are blocked. Here communication and sensing are performed with the assistance of an active RIS, which is equipped with $N$ active elements for signal reflection (with amplification) and $L$ sensing elements.

The BS transmits a combination of data and sensing signals as 
 \begin{equation}
     \mathbf{x} = \mathbf{W}_{\text{c}} \mathbf{s}_{\text{c}} + \mathbf{W}_{\text{r}} \mathbf{s}_{\text{r}},
 \end{equation}
where $\mathbf{W}_{\text{c}} \in \mathbb{C}^{M \times K}$ and $\mathbf{W}_{\text{r}} \in \mathbb{C}^{M \times Q}$ are the communication and sensing beamformers, respectively, with $Q$ denoting the number of radar waveform signals. Moreover, $\mathbf{s}_{\text{c}} \sim \mathcal{C} \mathcal{N} (\mathbf{0}_{K}, \mathbf{I}_{K})$ and $\mathbf{s}_{\text{r}} \sim \mathcal{C} \mathcal{N} (\mathbf{0}_{Q}, \mathbf{I}_{Q})$ represent the transmit data and the radar waveform vectors, respectively, which are assumed to be uncorrelated and statistically independent, i.e., $\mathbb{E} \{ \mathbf{s}_{\text{c}} \mathbf{s}_{\text{r}}^\textsc{H} \} = \mathbf{0}_{K\times Q}$.

The $k$-th user receives the following signal:
\begin{equation}
    y_k = \mathbf{h}_k^\textsc{T} \boldsymbol{\Theta} \mathbf{G} \mathbf{x} + \mathbf{h}_k^\textsc{T} \boldsymbol{\Theta} \boldsymbol{\omega}_\text{d} + \omega_k,
    \label{eq:y_k}
\end{equation}
where $\mathbf{h}_k \in \mathbb{C}^{N \times 1}$ and $\mathbf{G} \in \mathbb{C}^{N \times M}$ are the RIS-user $k$ channel and the BS-RIS channel, respectively. In addition, $\boldsymbol{\Theta} = \text{diag}([a_1 e^{j \theta_1}, \dots, a_N e^{j \theta_N}])$ is the active RIS reflection matrix, with the $n$-th reflecting element having an amplification factor of $a_n \leq a_\text{max}$, with $ a_\text{max}$ denoting the maximum amplification factor, and a phase shift of $\theta_n$. Furthermore, $\boldsymbol{\omega}_\text{d} \sim \mathcal{C} \mathcal{N} (\mathbf{0}_{N}, \sigma_{\text{d}}^2 \mathbf{I}_{N})$ and $\omega_k \sim \mathcal{C} \mathcal{N} (0, \sigma^2_k)$ are the dynamic noise at the active RIS and the $k$-th user's additive white Gaussian noise (AWGN), respectively.

On the other hand, the post-combining received echo signal at the RIS sensing array, after being reflected back from the target, can be written as
\begin{equation}
    y_\text{r} = \beta_\text{t} \mathbf{u}^\textsc{H}  \mathbf{d} \mathbf{c}^\textsc{T} \boldsymbol{\Theta} (\mathbf{G} \mathbf{x} + \boldsymbol{\omega}_\text{d})  + \mathbf{u}^\textsc{H} \boldsymbol{\omega}_\text{r},
    \label{eq:y_r}
\end{equation}
where $\beta_\text{t}$ is the target radar cross-section (RCS) with $\mathbb{E} \{|\beta_\text{t}|^2 \} = \varsigma_\text{t}^2$, $\mathbf{u} \in \mathbb{C}^{L\times 1}$ is the unit-norm receive beamformer, $\mathbf{c} \in \mathbb{C}^{N \times 1}$ and $\mathbf{d} \in \mathbb{C}^{L \times 1}$ denote the channel between the RIS reflection elements and target, and the channel between the target and RIS sensors, respectively, and $\boldsymbol{\omega}_\text{r}\sim \mathcal{C} \mathcal{N} (\mathbf{0}_{L \times 1}, \sigma_{\text{r}}^2 \mathbf{I}_{L})$ is the AWGN at the sensor array.

\section{Problem Formulation}
 We can obtain the SINR of the $k$-th user from~\eqref{eq:y_k} as
\begin{equation}
    \gamma_k (\mathbf{W}, \boldsymbol{\Theta}) = \frac{|\mathbf{h}_k^\textsc{T} \boldsymbol{\Theta} \mathbf{G} \mathbf{w}_{k}|^2}{ \sum_{\underset{i \neq k}{i=1}}^{K+Q} |\mathbf{h}_k^\textsc{T} \boldsymbol{\Theta} \mathbf{G} \mathbf{w}_{i}|^2 + \sigma_{\text{d}}^2 \| \mathbf{h}_k^\textsc{T} \boldsymbol{\Theta} \|_2^2 + \sigma_k^2},
\end{equation}
where  $
    \mathbf{W} = [\underbrace{\mathbf{w}_{1} \ \ \dots \ \ \mathbf{w}_{K}}_{\mathbf{W}_{\text{c}}} \ \ \underbrace{\mathbf{w}_{K+1}  \ \ \dots \ \ \mathbf{w}_{K+Q}}_{\mathbf{W}_{\text{r}}}].$

Also, we can write the SNR of the post-combining receive echo signal at the RIS sensors based on~\eqref{eq:y_r} as\footnote{We assume that the communication users are co-located, while there is a significant relative spatial separation between the communication users and the target. This allows the RIS sensors to distinguish the echo signals from the target and the group of communication users.}
\begin{equation}
\begin{split}
    \gamma_\text{r} (\mathbf{W}, \boldsymbol{\Theta} ,\mathbf{u})  = \frac{\varsigma_\text{t}^2 |\mathbf{u}^\textsc{H} \mathbf{d} |^2 \| \mathbf{c}^\textsc{T} \boldsymbol{\Theta} \mathbf{G} \mathbf{W} \|_2^2}{\varsigma_\text{t}^2 \sigma_\text{d}^2 | \mathbf{u}^\textsc{H} \mathbf{d} |^2 \| \mathbf{c}^\textsc{T} \boldsymbol{\Theta} \|_2^2 + \sigma_\text{r}^2}.
    \end{split}
    \label{eq:SINR_r}
\end{equation}

The total power consumption of the system can be expressed as~\cite{2023_Zhu}
\begin{equation}
    \mathcal{P} (\mathbf{W}, \boldsymbol{\Theta}) = \| \mathbf{W} \|_\textsc{F}^2 + \| \boldsymbol{\Theta} \mathbf{G} \mathbf{W} \|_\textsc{F}^2 + \sigma_\text{d}^2 \|\boldsymbol{\Theta} \|_\textsc{F}^2.
\end{equation}

In this paper, our goal is to maximize the echo SNR on the RIS sensors while adhering to constraints on maximum power consumption, minimum SINR for communication users, maximum RIS amplification factor, and the unit-norm constraint on the receive beamformer. The optimization problem is formulated as follows:
\begin{subequations}
\label{eq:opt_prob} 
\begin{align}
             \max_{  \mathbf{W},  \boldsymbol{\Theta},\mathbf{u} }   \ \    & \gamma_\text{r} (\mathbf{W}, \boldsymbol{\Theta} ,\mathbf{u}) \label{obj_fun} \\
          \text{s.t.} \ \ & \mathcal{P} (\mathbf{W}, \boldsymbol{\Theta}) \leq P_\text{max},  \label{power_budg} \\
          & \gamma_k (\mathbf{W}, \boldsymbol{\Theta}) \geq \Gamma_k, \ \ \forall k \in \{ 1,\dots, K \}, \label{Com_qos_cons} \\
           & a_n \leq a_\text{max}, \ \  \forall n \in \{ 1, \dots, N\} \label{max_amp_RIS}. \\
        & \| \mathbf{u} \|_2 = 1,
\end{align} 
\end{subequations}%
where $P_\text{max}$ denotes the power budget and $\Gamma_k$ is the SINR threshold value for the $k$-th user. The optimal solution of~\eqref{eq:opt_prob} is challenging to obtain directly due to the fractional form in the objective function and in the constraint set~\eqref{Com_qos_cons}, the coupling between the optimization variables in~\eqref{obj_fun},~\eqref{power_budg} and~\eqref{Com_qos_cons}, and the non-convex nature of the problem. 

\section{Proposed Solution}
In this section, we present a stationary solution to the optimization problem in~\eqref{eq:opt_prob}.

\subsection{Receive Beamformer}
We first express~\eqref{obj_fun} by removing the constant terms in~\eqref{eq:SINR_r} (i.e., $\varsigma _{\mathrm t}^2$ and $\| \mathbf c^{\mathrm T} \boldsymbol{\Theta} \mathbf G \mathbf W\|^2$) with respect to $\mathbf{u}$ as follows:
\begin{equation}
    \bar{\gamma}_\text{r} (\mathbf{u}) = \frac{\mathbf{u}^\textsc{H} \mathbf{d} \mathbf{d}^\textsc{H} \mathbf{u}}{\mathbf{u}^\textsc{H} (\varsigma_\text{t}^2 \sigma_\text{d}^2 \| \mathbf{c}^\textsc{T} \boldsymbol{\Theta} \|_2^2 \mathbf{d} \mathbf{d}^\textsc{H} + \sigma_\text{r}^2 \mathbf{I}_L) \mathbf{u}}.
    \label{eq:GRQ}
 \end{equation}
We note that the optimization of $\mathbf{u}$ is independent of $\mathbf{W}$ and \textit{approximately} independent of $\boldsymbol{\Theta}$. This can be observed by considering practical values of the parameters in the denominator of~\eqref{eq:SINR_r}; while $\sigma_\text{r}^2$ and $\sigma_\text{d}^2$ typically lie within the same range, the elements of the channel matrices $\mathbf{d}$ and $\mathbf{c}$ are relatively small. Thus, the denominator of~\eqref{eq:SINR_r} is dominated by the term $\sigma_\text{r}^2$. 

 To find the optimal $\mathbf{u}$ that maximize~\eqref{eq:GRQ}, we note that~\eqref{eq:GRQ} is in the generalized Rayleigh quotient form. Thus, $\mathbf{u}$ is chosen as the eigenvector corresponding to the largest eigenvalue of the matrix $(\varsigma_\text{t}^2 \sigma_\text{d}^2 \| \mathbf{c}^\textsc{T} \boldsymbol{\Theta} \|_2^2 \mathbf{d} \mathbf{d}^\textsc{H} + \sigma_\text{r}^2 \mathbf{I}_L)^{-1} \mathbf{d} \mathbf{d}^\textsc{H}$.

\subsection{Transmit Beamformer and RIS Matrix}
Next, we assume that the value of $\mathbf{u}$ is fixed, and we aim to optimize $\mathbf{W}$ and $\boldsymbol{\Theta}$. To accomplish this, we use an SCA-based method, which approximates the optimization problem locally by a convex problem and then solves it using convex optimization techniques.

Before proceeding further, we introduce the following lemmas.

\begin{proposition} \label{lemma:1}
Let $\mathbf{a} $ and $\mathbf{B}$ be constants, while a diagonal matrix $\mathbf{X}$ and a dense matrix $\mathbf{Y}$ are optimization variables. A concave lower bound on $ \| \mathbf{a}^\textsc{T} \mathbf{X} \mathbf{B} \mathbf{Y}\|_2^2$ around $\mathbf{X}_0$ and $\mathbf{Y}_0$ is given by
\begin{equation}
\begin{split}
    &  \| \mathbf{a}^\textsc{T} \mathbf{X} \mathbf{B} \mathbf{Y}\|_2^2 \geq \Omega_{\mathbf{a}, \mathbf{B}} (\mathbf{X},\mathbf{X}_0,\mathbf{Y},\mathbf{Y}_0) \\
    & \triangleq \Re \bigl\{ {\boldsymbol{\alpha}^\textsc{H}_{\mathbf{a}, \mathbf{B}}} (\mathbf{X}_0,\mathbf{X}_0,\mathbf{Y}_0,\mathbf{Y}_0) {\boldsymbol{\alpha}_{\mathbf{a}, \mathbf{B}}} (\mathbf{X},\mathbf{X}_0,\mathbf{Y},\mathbf{Y}_0) \bigl\} \\
    & - \frac{1}{2}  \|{\boldsymbol{\alpha}_{\mathbf{a}, \mathbf{B}}} (- \mathbf{X},\mathbf{X}_0,\mathbf{Y},\mathbf{Y}_0)\|_2^2  - \| {\boldsymbol{\alpha}_{\mathbf{a}, \mathbf{B}}} (\mathbf{0},\mathbf{X}_0,\mathbf{I},\mathbf{Y}_0) \|_2^2 \\
    & - \frac{1}{2} \|{\boldsymbol{\alpha}_{\mathbf{a}, \mathbf{B}}} (\mathbf{X}_0,\mathbf{X}_0,\mathbf{Y}_0,\mathbf{Y}_0)\|_2^2,
    \end{split}
    \label{eq:Lemm1}
\end{equation}
where $ {\boldsymbol{\alpha}_{\mathbf{a}, \mathbf{B}}} (\mathbf{Z}_1,\mathbf{Z}_2,\mathbf{Z}_3,\mathbf{Z}_4) \triangleq  \mathbf{Z}_3 \mathbf{Z}_4^\textsc{H} \mathbf{B}^\textsc{H} \text{diag}(\mathbf{a}^*) \text{diag}(\mathbf{Z}_2^*) + \mathbf{B}^\textsc{H} \text{diag}(\mathbf{a}^*) \text{diag}(\mathbf{Z}_1^*).$
\end{proposition}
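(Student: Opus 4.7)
The plan is to derive the bound by combining two applications of the first-order tangent-plane lower bound $\|\mathbf{p}\|_2^2 \geq 2\Re(\mathbf{p}_0^\textsc{H}\mathbf{p}) - \|\mathbf{p}_0\|_2^2$ with the polarization identity, in order to strip the bilinear structure from $\|\mathbf{a}^\textsc{T}\mathbf{X}\mathbf{B}\mathbf{Y}\|_2^2$ one layer at a time.

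First, I would recast the objective in a form that exposes the bilinearity. Letting $\mathbf{x}$ denote the vector of diagonal entries of $\mathbf{X}$ and introducing $\mathbf{v}(\mathbf{X}) \triangleq \mathbf{B}^\textsc{H}\text{diag}(\mathbf{a}^*)\mathbf{x}^*$, a direct manipulation using $\mathbf{X}^\textsc{H}\mathbf{a}^* = \text{diag}(\mathbf{a}^*)\mathbf{x}^*$ shows that $\|\mathbf{a}^\textsc{T}\mathbf{X}\mathbf{B}\mathbf{Y}\|_2^2 = \|\mathbf{Y}^\textsc{H}\mathbf{v}(\mathbf{X})\|_2^2$. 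Writing $\mathbf{v}\triangleq \mathbf{v}(\mathbf{X})$, $\mathbf{v}_0\triangleq \mathbf{v}(\mathbf{X}_0)$, and $\mathbf{u}\triangleq \mathbf{Y}\mathbf{Y}_0^\textsc{H}\mathbf{v}_0$, one then checks by direct substitution that the four $\boldsymbol{\alpha}$-evaluations appearing in the lemma unpack cleanly as $\boldsymbol{\alpha}(\mathbf{X},\mathbf{X}_0,\mathbf{Y},\mathbf{Y}_0) = \mathbf{u}+\mathbf{v}$, $\boldsymbol{\alpha}(-\mathbf{X},\mathbf{X}_0,\mathbf{Y},\mathbf{Y}_0) = \mathbf{u}-\mathbf{v}$, $\boldsymbol{\alpha}_0\triangleq \boldsymbol{\alpha}(\mathbf{X}_0,\mathbf{X}_0,\mathbf{Y}_0,\mathbf{Y}_0) = \mathbf{Y}_0\mathbf{Y}_0^\textsc{H}\mathbf{v}_0 + \mathbf{v}_0$, and $\boldsymbol{\alpha}(\mathbf{0},\mathbf{X}_0,\mathbf{I},\mathbf{Y}_0) = \mathbf{Y}_0^\textsc{H}\mathbf{v}_0$. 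This alignment is the key structural observation that motivates the form of $\Omega_{\mathbf{a},\mathbf{B}}$.

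Next, I would apply the tangent-plane inequality to the outer norm with reference $\mathbf{Y}_0^\textsc{H}\mathbf{v}_0$ to obtain $\|\mathbf{Y}^\textsc{H}\mathbf{v}\|_2^2 \geq 2\Re(\mathbf{u}^\textsc{H}\mathbf{v}) - \|\mathbf{Y}_0^\textsc{H}\mathbf{v}_0\|_2^2$. The bilinear (hence non-concave) cross term $2\Re(\mathbf{u}^\textsc{H}\mathbf{v})$ is then rewritten via the polarization identity $4\Re(\mathbf{u}^\textsc{H}\mathbf{v}) = \|\mathbf{u}+\mathbf{v}\|_2^2 - \|\mathbf{u}-\mathbf{v}\|_2^2$, which, under the identifications above, becomes $\tfrac{1}{2}\|\boldsymbol{\alpha}\|_2^2 - \tfrac{1}{2}\|\boldsymbol{\alpha}(-\mathbf{X},\mathbf{X}_0,\mathbf{Y},\mathbf{Y}_0)\|_2^2$. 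The $\|\boldsymbol{\alpha}\|_2^2$ piece is still convex in $(\mathbf{X},\mathbf{Y})$, since $\boldsymbol{\alpha}$ is jointly affine, so it admits its own tangent-plane lower bound $\|\boldsymbol{\alpha}\|_2^2 \geq 2\Re(\boldsymbol{\alpha}_0^\textsc{H}\boldsymbol{\alpha}) - \|\boldsymbol{\alpha}_0\|_2^2$ at $\boldsymbol{\alpha}_0$. Chaining the two inequalities collects the terms into exactly the right-hand side of \eqref{eq:Lemm1}.

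Concavity of the resulting surrogate follows at once: $\Re(\boldsymbol{\alpha}_0^\textsc{H}\boldsymbol{\alpha})$ is affine in $(\mathbf{X},\mathbf{Y})$, $-\tfrac{1}{2}\|\boldsymbol{\alpha}(-\mathbf{X},\mathbf{X}_0,\mathbf{Y},\mathbf{Y}_0)\|_2^2$ is the negative of a squared norm of an affine function, and the remaining two terms are constants. The main obstacle is identifying the correct encoding into $\boldsymbol{\alpha}$: once one sees that $\boldsymbol{\alpha}$ and $\boldsymbol{\alpha}(-\mathbf{X},\mathbf{X}_0,\mathbf{Y},\mathbf{Y}_0)$ exactly realize $\mathbf{u}+\mathbf{v}$ and $\mathbf{u}-\mathbf{v}$, the polarization step and the two first-order convex surrogates slot together with no further technical input, and the only remaining effort is the algebraic bookkeeping that recovers the specific coefficients $\tfrac{1}{2}$, $-\tfrac{1}{2}$, $-1$, $-\tfrac{1}{2}$ in $\Omega_{\mathbf{a},\mathbf{B}}$.
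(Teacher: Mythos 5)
Your proposal is correct and follows essentially the same route as the paper's Appendix~A: rewrite the objective as $\|\mathbf{Y}^\textsc{H}\mathbf{B}^\textsc{H}\mathbf{A}^*\mathbf{x}^*\|_2^2$, apply the first-order tangent bound~\eqref{R1} to the outer norm, split the resulting bilinear cross term via the polarization identity~\eqref{R2}, and linearize the remaining convex piece with a second application of~\eqref{R1}. Your identification of the four $\boldsymbol{\alpha}$-evaluations with $\mathbf{u}+\mathbf{v}$, $\mathbf{u}-\mathbf{v}$, $\mathbf{u}_0+\mathbf{v}_0$ and $\mathbf{Y}_0^\textsc{H}\mathbf{v}_0$ matches the paper's bookkeeping exactly, so no further comment is needed.
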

\begin{proof}
See Appendix~\ref{AppndA}.
\end{proof}

\begin{proposition} \label{lemma:2}
Let $\mathbf{a} $ and $\mathbf{B}$ be constants, while a diagonal matrix $\mathbf{X}$ and a vector $\mathbf{y}$ are optimization variables. A convex upper bound on $ | \mathbf{a}^\textsc{T} \mathbf{X} \mathbf{B} \mathbf{y}|^2$ around $\mathbf{X}_0$ and $\mathbf{y}_0$ is given by
\begin{equation}
      | \mathbf{a}^\textsc{T} \mathbf{X} \mathbf{B} \mathbf{y}|^2 \leq \rho^2 + \kappa^2, 
\end{equation}
where $\rho$ and $\kappa$ are slack variables satisfying the following constraints:
\begin{subequations}
\begin{align}
     \rho & \geq \Lambda_{\mathbf{a},\mathbf{B}} (\mathbf{X},\mathbf{X}_0,\mathbf{y},\mathbf{y}_0), \label{eq:in1} \\
     \rho & \geq \Lambda_{\mathbf{a},\mathbf{B}} (\mathbf{X},\mathbf{X}_0,-\mathbf{y},-\mathbf{y}_0), \label{eq:in2} \\
     \kappa & \geq \Lambda_{\mathbf{a},\mathbf{B}} (\mathbf{X},\mathbf{X}_0,j\mathbf{y},j\mathbf{y}_0), \label{eq:in3}  \\
     \kappa & \geq \Lambda_{\mathbf{a},\mathbf{B}} (\mathbf{X},\mathbf{X}_0,-j\mathbf{y},-j\mathbf{y}_0), \label{eq:in4} 
\end{align}
\end{subequations}
where $\Lambda_{\mathbf{a},\mathbf{B}} (\mathbf{Z}_1,\mathbf{Z}_2,\mathbf{z}_3,\mathbf{z}_4) \triangleq \frac{1}{4} \| \text{diag}(\mathbf{Z}_1^*) + \text{diag}(\mathbf{a})\mathbf{B}\mathbf{z}_3 \|_2^2 - \frac{1}{2} \Re \bigl\{ (\text{diag}(\mathbf{Z}_2^*) - \text{diag}(\mathbf{a}) \mathbf{B}\mathbf{z}_4)^\textsc{H} (\text{diag}(\mathbf{Z}_1^*)- \text{diag}(\mathbf{a}) \mathbf{B}\mathbf{z}_3) \bigl\}  + \frac{1}{4} \|\text{diag}(\mathbf{Z}_2^*) - \text{diag}(\mathbf{a}) \mathbf{B}\mathbf{z}_4 \|_2^2.$

\end{proposition}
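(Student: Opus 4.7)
The plan is to decompose $|z|^2$ for $z\triangleq \mathbf{a}^\textsc{T}\mathbf{X}\mathbf{B}\mathbf{y}$ into $(\Re z)^2 + (\Im z)^2$ and to use the two slack variables $\rho,\kappa$ to carry separately convex upper bounds on $|\Re z|$ and $|\Im z|$. The crucial preliminary step is to recast the bilinear coupling between $\mathbf{X}$ and $\mathbf{y}$ as a single inner product: since $\mathbf{X}$ is diagonal, $\mathbf{a}^\textsc{T}\mathbf{X} = (\text{diag}(\mathbf{X}))^\textsc{T}\text{diag}(\mathbf{a})$, so with the shorthands $\mathbf{p}\triangleq \text{diag}(\mathbf{X}^*)$ and $\mathbf{q}\triangleq \text{diag}(\mathbf{a})\mathbf{B}\mathbf{y}$ (and $\mathbf{p}_0,\mathbf{q}_0$ defined identically at $\mathbf{X}_0,\mathbf{y}_0$) one obtains $z = \mathbf{p}^\textsc{H}\mathbf{q}$, where both $\mathbf{p}$ and $\mathbf{q}$ are affine in the optimization variables.

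I would then invoke the polarization identity $4\Re(\mathbf{p}^\textsc{H}\mathbf{q}) = \|\mathbf{p}+\mathbf{q}\|_2^2 - \|\mathbf{p}-\mathbf{q}\|_2^2$, keep the convex term $\tfrac{1}{4}\|\mathbf{p}+\mathbf{q}\|_2^2$ as it stands, and handle the concave term $-\tfrac{1}{4}\|\mathbf{p}-\mathbf{q}\|_2^2$ by a first-order Taylor majorization around $\mathbf{p}_0-\mathbf{q}_0$, using the standard bound $-\|\mathbf{v}\|_2^2 \leq \|\mathbf{v}_0\|_2^2 - 2\Re\{\mathbf{v}_0^\textsc{H}\mathbf{v}\}$ applied to $\mathbf{v}=\mathbf{p}-\mathbf{q}$. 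Collecting the three pieces and unwinding the shorthands yields exactly $\Re(z) \leq \Lambda_{\mathbf{a},\mathbf{B}}(\mathbf{X},\mathbf{X}_0,\mathbf{y},\mathbf{y}_0)$; each summand of $\Lambda$ is either a squared norm of an affine function (hence convex) or affine in $(\mathbf{X},\mathbf{y})$, so the constraint $\rho \geq \Lambda$ is convex. I expect the main obstacle to be precisely this step, namely spotting the polarization-plus-linearization structure inside $\Lambda$, because the three terms are otherwise opaque; once the substitution $\mathbf{p}=\text{diag}(\mathbf{X}^*)$, $\mathbf{q}=\text{diag}(\mathbf{a})\mathbf{B}\mathbf{y}$ is in place the remaining algebra is mechanical.

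The final step is to upgrade the one-sided bound on $\Re(z)$ to a two-sided bound on $|\Re(z)|$ and to handle the imaginary part analogously. Replacing $(\mathbf{y},\mathbf{y}_0)$ by $(-\mathbf{y},-\mathbf{y}_0)$ in the above derivation gives $-\Re(z)\leq \Lambda_{\mathbf{a},\mathbf{B}}(\mathbf{X},\mathbf{X}_0,-\mathbf{y},-\mathbf{y}_0)$, so constraints \eqref{eq:in1} and \eqref{eq:in2} together force $\rho \geq \max\{\Re z,-\Re z\} = |\Re z| \geq 0$. For the imaginary part I would exploit the identity $\Re(jz) = -\Im(z)$ and apply the identical argument to $jz = \mathbf{a}^\textsc{T}\mathbf{X}\mathbf{B}(j\mathbf{y})$, which produces \eqref{eq:in3} and, via the same negation trick, \eqref{eq:in4}, so $\kappa \geq |\Im z| \geq 0$. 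Squaring and summing the two nonnegative inequalities then gives $\rho^2 + \kappa^2 \geq |\Re z|^2 + |\Im z|^2 = |z|^2$, which is the claimed convex upper bound.
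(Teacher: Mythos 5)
Your proposal is correct and follows essentially the same route as the paper's Appendix~B: split $|z|^2$ into $(\Re z)^2+(\Im z)^2$, express $\Re z$ via the polarization identity as a difference of squared norms of $\text{diag}(\mathbf{X}^*)\pm\text{diag}(\mathbf{a})\mathbf{B}\mathbf{y}$, keep the convex term and linearize the concave one by the first-order bound $-\|\mathbf{v}\|_2^2\le\|\mathbf{v}_0\|_2^2-2\Re\{\mathbf{v}_0^\textsc{H}\mathbf{v}\}$, then obtain the remaining three constraints from the sign flips $\pm\mathbf{y}$, $\pm j\mathbf{y}$. The only difference is cosmetic (your explicit shorthands $\mathbf{p},\mathbf{q}$ versus the paper's $\mathbf{x}^*$, $\mathbf{A}\mathbf{B}\mathbf{y}$), so no further comment is needed.
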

\begin{proof}
See Appendix~\ref{AppndB}.
\end{proof}

\textit{1) Handling the objective function:} We start describing the proposed SCA algorithm by tackling the objective function~\eqref{obj_fun}, which, after dropping the constant terms, can be written as 
\begin{equation}
    \Tilde{\gamma}_\text{r} (\mathbf{W}, \boldsymbol{\Theta}) = \frac{ \| \mathbf{c}^\textsc{T} \boldsymbol{\Theta} \mathbf{G} \mathbf{W} \|_2^2}{\varsigma_\text{t}^2 \sigma_\text{d}^2 | \mathbf{u}^\textsc{H} \mathbf{d} |^2 \| \mathbf{c}^\textsc{T} \boldsymbol{\Theta} \|_2^2 + \sigma_\text{r}^2}.
    \label{eq:OF_WT}
\end{equation}

To address~\eqref{eq:OF_WT}, we introduce two slack variables $t$ and $q$ such that
\begin{subequations}
\begin{align}
    t &\leq  \| \mathbf{c}^\textsc{T} \boldsymbol{\Theta} \mathbf{G} \mathbf{W} \|_2^2,
    \label{eq:t_con} \\
    q & \geq  \varsigma_\text{t}^2 \sigma_\text{d}^2 | \mathbf{u}^\textsc{H} \mathbf{d} |^2 \| \mathbf{c}^\textsc{T} \boldsymbol{\Theta} \|_2^2 + \sigma_\text{r}^2. \label{eq:q_con}
    \end{align}
\end{subequations}
Then,~\eqref{eq:OF_WT} can be lower-bounded by the expression
\begin{equation}
    \mathcal{F} (t,q)\triangleq \frac{t}{q} \leq \Tilde{\gamma}_\text{r} (\mathbf{W}, \boldsymbol{\Theta}).
    \label{eq:t_q}
\end{equation}
We can find an appropriate surrogate function of $\mathcal{F}(t,q)$ by linearizing it around the point $(t^{(\eta-1)},q^{(\eta-1)})$, where $t^{(\eta)}$ and $q^{(\eta)}$ are the values of $t$ and $q$ at the $\eta$-th iteration of the SCA algorithm. This surrogate function can be written as
\begin{align}
     & \bar{\mathcal{F}}\big(t,q,t^{(\eta-1)},q^{(\eta-1)}\big)  = \frac{t^{(\eta-1)}}{q^{(\eta-1)}} + \frac{1}{q^{(\eta-1)}} \\
     & \times \Big( (t -t^{(\eta-1)}) -   \frac{t^{(\eta-1)}}{q^{(\eta-1)}} (q - q^{(\eta-1)}) \Big) , \label{eq:sur_t_q}
\end{align}
which is linear in $t$ and $q$. 

Next, we focus on the new constraints~\eqref{eq:t_con} and~\eqref{eq:q_con}. While~\eqref{eq:q_con} is already convex, the right-hand side of~\eqref{eq:t_con} needs to be approximated by a convex surrogate function. For that, \textit{Lemma}~\ref{lemma:1} can be utilized to approximate~\eqref{eq:t_con} as
\begin{equation}
    t \leq \Omega_{\mathbf{c}, \mathbf{G}} (\boldsymbol{\Theta},\boldsymbol{\Theta}^{(\eta-1)},\mathbf{W},\mathbf{W}^{(\eta-1)}),
    \label{eq:t_con_apx}
\end{equation}
which is jointly concave with respect to $\mathbf{W}$ and $\boldsymbol{\Theta}$.

\textit{2) Handling the power budget constraint:} Next, we focus on the constraint~\eqref{power_budg}, where all left-hand side terms are convex except for $\|\boldsymbol{\Theta} \mathbf{G} \mathbf{W}\|_\textsc{F}^2$, which we express as
\begin{equation}
    \|\boldsymbol{\Theta} \mathbf{G} \mathbf{W}\|_\textsc{F}^2  =  \sum_{n=1}^N \| a_n e^{j \theta_n} \mathbf{g}_n^\textsc{T} \mathbf{W} \|_2^2   \leq    a_\text{max}^2 \|\mathbf{G} \mathbf{W}\|_\textsc{F}^2.
    \label{eq:convxfy}
\end{equation}
Here $\mathbf{g}_n^\textsc{T}$ represents the $n$-th row of $\mathbf{G}$. Noting that the maximum amplification factor of an active RIS is typically not large and that the magnitudes of the channel coefficients $\mathbf{G}$ are usually small, the inequality in~\eqref{eq:convxfy} is reasonably tight.

 Hence, we can approximate~\eqref{power_budg} as
\begin{equation}
     \| \mathbf{W} \|_\textsc{F}^2 + a_\text{max}^2 \|\mathbf{G} \mathbf{W}\|_\textsc{F}^2 + \sigma_\text{d}^2 \|\boldsymbol{\Theta} \|_\textsc{F}^2  \leq P_\text{max}.
    \label{eq:new_pd3}
\end{equation}

\textit{3) Handling the SINR constraints for the communication users:} We finally tackle the constraint set~\eqref{Com_qos_cons}, which we first express for each $k \in \{1,\dots,K\}$ as
\begin{align}
    |\mathbf{h}_k^\textsc{T} \boldsymbol{\Theta} \mathbf{G} \mathbf{w}_{k}|^2 \geq & \ \Gamma_k \Big[ \sum \nolimits_{i = 1, i \neq k}^{K+Q} |\mathbf{h}_k^\textsc{T} \boldsymbol{\Theta} \mathbf{G} \mathbf{w}_{i}|^2 \notag \\
    & \qquad + \sigma_{\text{d}}^2 \| \mathbf{h}_k^\textsc{T} \boldsymbol{\Theta} \|_2^2 + \sigma_k^2\Big].
    \label{eq:QoS_conv}
\end{align}

To obtain a concave surrogate function on the left-hand side of~\eqref{eq:QoS_conv}, we use \textit{Lemma}~\ref{lemma:1}, yielding 
\begin{equation}
     |\mathbf{h}_k^\textsc{T} \boldsymbol{\Theta} \mathbf{G} \mathbf{w}_{k}|^2 \geq \Omega_{\mathbf{h}_k, \mathbf{G}} (\boldsymbol{\Theta},\boldsymbol{\Theta}^{(\eta-1)},\mathbf{w}_k,\mathbf{w}_k^{(\eta-1)}).
\end{equation}
We also need to find surrogate functions for the terms inside the summation in the right-hand side of~\eqref{eq:QoS_conv}. To do this, we use \textit{Lemma}~\ref{lemma:2} to convert~\eqref{eq:QoS_conv} to the following form:
\begin{equation}
\begin{split}
   & \Omega_{\mathbf{h}_k, \mathbf{G}} (\boldsymbol{\Theta},\boldsymbol{\Theta}^{(\eta-1)},\mathbf{w}_k,\mathbf{w}_k^{(\eta-1)}) \\
   & \geq \Gamma_k \Big[ \sum \nolimits_{i = 1, i \neq k}^{K+Q} (\tau_{ki}^2 + \varpi_{ki}^2) + \sigma_{\text{d}}^2 \| \mathbf{h}_k^\textsc{T} \boldsymbol{\Theta} \|_2^2 + \sigma_k^2 \Big],
   \end{split}
    \label{eq:QoS_apx}
\end{equation}
where $\tau_{ki}$ and $\varpi_{ki}$ satisfy the following conditions:
\begin{subequations}
\begin{align}
     \tau_{ki} & \geq \Lambda_{\mathbf{h}_k,\mathbf{G}} (\boldsymbol{\Theta},\boldsymbol{\Theta}^{(\eta-1)},\mathbf{w}_i,\mathbf{w}_i^{(\eta-1)}), \label{eq:nc1} \\
     \tau_{ki} & \geq \Lambda_{\mathbf{h}_k,\mathbf{G}} (\boldsymbol{\Theta},\boldsymbol{\Theta}^{(\eta-1)},-\mathbf{w}_i,-\mathbf{w}_i^{(\eta-1)}), \label{eq:nc2} \\
     \varpi_{ki} & \geq \Lambda_{\mathbf{h}_k,\mathbf{G}} (\boldsymbol{\Theta},\boldsymbol{\Theta}^{(\eta-1)},j\mathbf{w}_i,j\mathbf{w}_i^{(\eta-1)}), \label{eq:nc3}  \\
     \varpi_{ki} & \geq \Lambda_{\mathbf{h}_k,\mathbf{G}} (\boldsymbol{\Theta},\boldsymbol{\Theta}^{(\eta-1)},-j\mathbf{w}_i,-j\mathbf{w}_i^{(\eta-1)}). \label{eq:nc4} 
\end{align}
\label{eq:nc}
\end{subequations}

Therefore, the final form of the approximated optimization problem in the $\eta$-th iteration of the SCA method is
\begin{subequations}
\label{eq:opt_jo}
\begin{align}
         & \max_{  \mathbf{W},  \boldsymbol{\Theta},t,q, \{ \tau_{ki }\}, \{ \varpi_{ki }\} }   \ \   t - \big\{t^{(\eta-1)} / q^{(\eta-1)}\big\} q  \\
          \text{s.t.} \ \ & \eqref{max_amp_RIS}, \eqref{eq:q_con},\eqref{eq:t_con_apx},  \eqref{eq:new_pd3} \\
          & \eqref{eq:QoS_apx}, \ \ \forall k \in \{ 1,\dots,K\}\\
          &   \eqref{eq:nc}, \ \ \forall k \in \{ 1,\dots,K\}, \forall i \in \{ 1,\dots,K+Q\},
\end{align} 
\end{subequations}%
which is convex and can be solved using the CVX toolbox\footnote{The optimization problem~\eqref{eq:opt_jo} can be further reformulated as a second-order cone programming (SOCP) problem.}.~\textbf{Algorithm~\ref{algo}} summarizes the proposed method. 

\begin{algorithm}[t]
\caption{Echo signal SNR  maximization in sensor-aided active RIS-assisted ISAC.} \label{algo}

\KwIn{ $\boldsymbol{\Theta}^{(0)}$, $\mathbf{W}^{(0)}$, $\eta = 0$, $\epsilon \!>\! 0$ }


Compute $\mathbf u$ using eigenvalue decomposition (see Sec.~IV-A)\;

Compute $\gamma_\text{r} (\mathbf{W}^{(0)}, \boldsymbol{\Theta}^{(0)} ,\mathbf{u})$\;

Compute $ t^{(0)} = \| \mathbf{c}^\textsc{T} \boldsymbol{\Theta}^{(0)} \mathbf{G} \mathbf{W}^{(0)} \|_2^2$ and $q^{(0)} =  \varsigma_\text{t}^2 \sigma_\text{d}^2 | \mathbf{u}^\textsc{H} \mathbf{d} |^2 \| \mathbf{c}^\textsc{T} \boldsymbol{\Theta}^{(0)} \|_2^2 + \sigma_\text{r}^2$\;

\Repeat{$|\gamma_\text{r} (\mathbf{W}^{(\eta)}, \! \boldsymbol{\Theta}^{(\eta)}, \! \mathbf{u}) \! - \! \gamma_\text{r} (\mathbf{W}^{(\eta-1)}, \! \boldsymbol{\Theta}^{(\eta-1)}, \! \mathbf{u}) | \leq \epsilon$ }{

$\eta \leftarrow \eta + 1$

Compute $\mathbf{W}^{(\eta)}$ and $\boldsymbol{\Theta}^{(\eta)}$ by solving~\eqref{eq:opt_jo}\;
}

\KwOut{ $\mathbf{u}$, $\boldsymbol{\Theta} = \boldsymbol{\Theta}^{(\eta)}$,  $\mathbf{W} = \mathbf{W}^{(\eta)}$}
\end{algorithm}

\section{Numerical Results}

In this section, we present numerical simulation results to demonstrate the performance of the proposed system and we compare its performance to that of the non-sensor-aided active (and passive) RIS-aided system described in~\cite{2023_Zhu}, and to that of a system with a sensor-aided passive RIS. Unless stated otherwise, we use the following simulation parameters. We consider a system with $M=4$ transmit antennas and a total of $64$ RIS elements, where $N=40$ and $L=24$. Furthermore, we assume that $K=Q=4$, $\varsigma_\text{t}^2 = \unit[0.8]{m^2}$, $P_\text{max} = \unit[40]{dBm}$, $\gamma_k = \unit[10]{dB}$ for all $k \in \{1,\dots,K \}$, and $\sigma_\text{r}^2 = \sigma_\text{d}^2 = \sigma_\text{k}^2 = \unit[-70]{dBm}$ for all $k \in \{1,\dots,K \}$. We also set the convergence threshold to $\epsilon = 1 \times 10^{-3}$.

\begin{figure*}
    \begin{minipage}[t]{0.45\textwidth}
       \centering 
    \includegraphics[width=0.85\columnwidth]{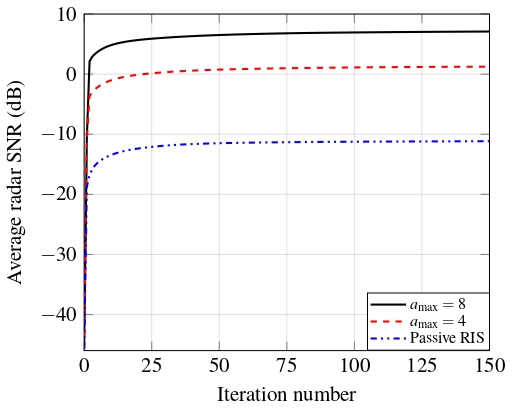}
    \caption{Average SNR of the echo signal at the RIS sensors at each iteration.}
    \label{fig:conv}
    \end{minipage}
    \hfill 
    \begin{minipage}[t]{0.45\textwidth}
        \centering 
        \includegraphics[width=0.85\columnwidth]{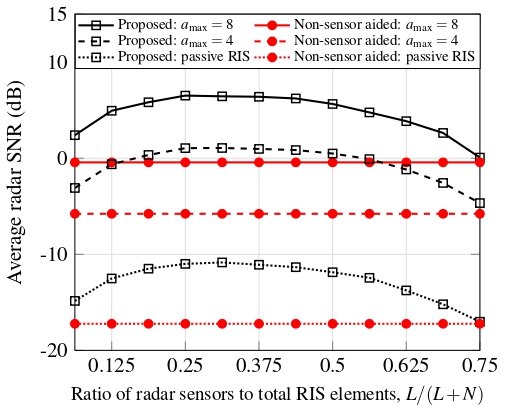} 
        \caption{Average SNR of the echo as the ratio of radar sensors to total RIS elements increases.}
        \label{fig:L_N}
    \end{minipage}
\end{figure*}

\begin{figure*}
    \begin{minipage}[t]{0.45\textwidth}
        \centering 
        \includegraphics[width=0.85\columnwidth]{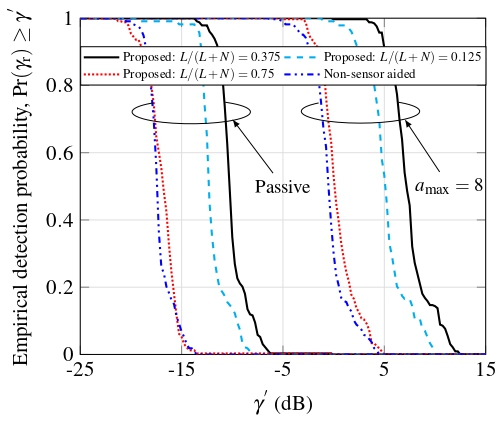}  
        \caption{Empirical detection probability as a function of the detection threshold.} 
        \label{fig:CDF}
    \end{minipage}
    \hfill 
    \begin{minipage}[t]{0.45\textwidth}
        \centering 
        \includegraphics[width=0.85\columnwidth]{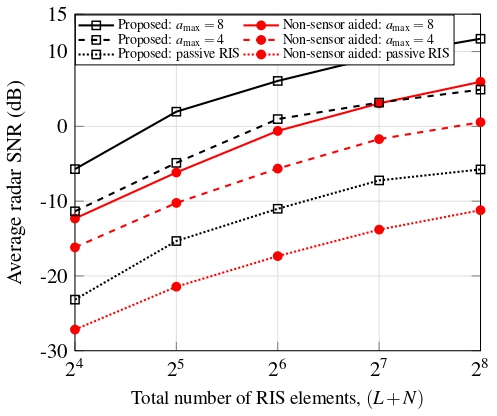}
        \caption{The average SNR of the echo as the total number of RIS elements (i.e., $L+N$) varies.} 
        \label{fig:NRIS}
    \end{minipage}
\end{figure*}

The BS and the RIS are located at $\unit[(0,0,2.5)]{m}$ and $\unit[(20,0,2.5)]{m}$, respectively. The users are assumed to be uniformly distributed within a circle centered at $\unit[(20,5,0)]{m}$ with a radius of $\unit[4]{m}$. The target is assumed to be located $\unit[10]{m}$ away from the RIS at azimuth and elevation angles of $-30^\circ$ and $40^\circ$, respectively, relative to the RIS. The BS-RIS and RIS-user channels are assumed to be Rician distributed, both with a Rician factor of $\unit[3]{dB}$ and a path loss exponent of 2.2. In addition, the RIS-target channel is assumed to be an LoS channel and is modeled using the corresponding steering vector. Finally, Monte Carlo simulations are conducted by averaging the performance metric over 300 independent realizations of channels and user locations.

Fig.~\ref{fig:conv} illustrates the convergence of {\bf Algorithm~\ref{algo}} by showing the average radar SNR value at each iteration of the algorithm. For the case of the sensor-aided active RIS, we consider two scenarios, i.e., $a_\text{max} =8$ and $a_\text{max} =4$. We also consider the case of a sensor-aided passive RIS, corresponding to $a_n = 1$ for all $n \in \{1,\dots,N \}$ and $\sigma_\text{d}^2 = 0$. It can be observed from the figure that the algorithm converges very rapidly (i.e., within less than 50 iterations) and the value obtained within a very small number of iterations is quite near to the optimal value. This is an important practical aspect of the algorithm, since in practice only a few iterations need be run. In the next simulations, we stop
\textbf{Algorithm~\ref{algo}} when either $\epsilon = 1 \times 10^{-3}$ or when the number of iterations reaches 50.



Fig.~\ref{fig:L_N} shows the average echo SNR value as the ratio of radar sensors to total RIS elements (i.e., $L / (L+N)$) increases. When this ratio is small, the signal that hits the target is strong due to the high RIS beamforming gain. However, the echo SNR is still low due to the limited number of receive sensors at the RIS. At high ratios of $L$ to $L+N$, the RIS beamforming gain is lower because of the small number of elements allocated for signal reflection, which reduces the strength of the signal impinging on the target, resulting in a low echo SNR despite the larger number of receive sensors. It can be noted that optimal performance occurs at a ratio of approximately 0.3. Compared to the extreme points, this optimal value is around \unit[4]{dB} higher, leading to approximately 2.5 times better echo SNR. Furthermore, the proposed sensor-aided system clearly outperforms the non-sensor-aided RIS case, even when the ratio of receive sensors to total number of RIS elements is not optimized.

Fig.~\ref{fig:CDF} depicts the empirical detection probability, measured as the number of times the echo SNR is greater than or equal to the detection threshold $\gamma'$ divided by the number of realizations. It can be observed, for example, that for $a_\text{max} = 8$, a detection probability of 0.9 can be achieved when the detection threshold is \unit[-1]{dB} and \unit[5]{dB} for the case where $L/(L+N)$ is equal to $0.375$ and $0.75$, respectively. This \unit[6]{dB} difference highlights the importance of operating at the near-optimal ratio of sensors to total RIS elements. Additionally, it can be observed that, in the vast majority of cases, for a given detection threshold, the proposed sensor-aided RIS system achieves a higher detection probability than the non-sensor-aided RIS.



In Fig.~\ref{fig:NRIS}, we plot the average target echo SNR as a function of the total number of active elements and radar sensors. In the system with sensor-aided RIS, we maintain the ratio of $L$ to $L+N$ at 0.375. It can be observed that increasing the total number of RIS elements enhances the radar SNR. However, the rate of increase is higher in the non-sensor-aided system compared to the proposed sensor-aided RIS. This is because having more elements in the non-sensor-aided RIS achieves better reflection beamforming gain for both the transmitted signal towards the target and the echo signal traveling back to the BS. Although the rate of increase is higher in the system with non-sensor-aided RIS, it is still outperformed by the proposed system with sensor-aided RIS, even when $L+N = 256$. 


\section{Conclusion}
This paper explored ISAC system design through use of a sensor-aided active RIS. The primary objective was to maximize the sensing echo SNR while adhering to constraints on communication SINRs, power budgets, and RIS amplification limits. Addressing this challenging optimization problem, we proposed a solution using the generalized Rayleigh quotient for receive beamforming and a successive convex approximation method for the transmit beamforming and RIS matrix. Simulations showed the effectiveness and superiority of the proposed sensor-aided active RIS system over the non-sensor-aided system with RIS.

\appendices
\section{Proof of Lemma~\ref{lemma:1}} \label{AppndA}
We first present the following inequality, derived using the first-order Taylor approximation of the squared norm of an arbitrary complex vector $\mathbf{v}$ around $\mathbf{v}_0$:
\begin{equation}
    \| \mathbf{v} \|_2^2  \geq 2 \Re \{ \mathbf{v}_0^\textsc{H}  \mathbf{v} \} - \| \mathbf{v}_0 \|_2^2.
    \label{R1}
\end{equation}
Moreover, we recall the following identities:
\begin{subequations}
\begin{align}
     \Re \{ \mathbf{v}_1^\textsc{H} \mathbf{v}_2 \}   & = \frac{1}{4} \Big( \| \mathbf{v}_1 + \mathbf{v}_2\|_2^2 - \| \mathbf{v}_1 - \mathbf{v}_2\|_2^2 \Big), \label{R2} \\
     \Im \{ \mathbf{v}_1^\textsc{H} \mathbf{v}_2 \}   & = \frac{1}{4} \Big( \| \mathbf{v}_1 -j \mathbf{v}_2\|_2^2 - \| \mathbf{v}_1 +j \mathbf{v}_2\|_2^2 \Big), \label{R3} 
\end{align}
\end{subequations}
which hold for any arbitrary complex vectors $\mathbf{v}_1$ and $\mathbf{v}_2$. We then re-express $\| \mathbf{a}^\textsc{T} \mathbf{X} \mathbf{B} \mathbf{Y}\|_2^2$ as
\begin{equation}
      \| \mathbf{a}^\textsc{T} \mathbf{X} \mathbf{B} \mathbf{Y}\|_2^2 = \| \mathbf{x}^\textsc{T} \mathbf{A} \mathbf{B} \mathbf{Y}\|_2^2 = \| \mathbf{Y}^\textsc{H} \mathbf{B}^\textsc{H} \mathbf{A}^* \mathbf{x}^*   \|_2^2,
\end{equation}
where $\mathbf{x} \triangleq \text{diag}(\mathbf{x})$ and $\mathbf{A} \triangleq \text{diag}(\mathbf{a})$. Next, we utilize~\eqref{R1} to find the following lower bound:
  \begin{equation}
  \begin{split}
      \| \mathbf{Y}^\textsc{H} \mathbf{B}^\textsc{H} \mathbf{A}^* \mathbf{x}^*   \|_2^2 & \geq 2 \Re \{ {\mathbf{x}_0^\textsc{T}} \mathbf{A} \mathbf{B}  {\mathbf{Y}_0}   \mathbf{Y}^\textsc{H} \mathbf{B}^\textsc{H} \mathbf{A}^* \mathbf{x}^*\} \\
      & - \|  {\mathbf{Y}_0}^\textsc{H} \mathbf{B}^\textsc{H} \mathbf{A}^* \mathbf{x}_0^* \|_2^2.
        \end{split}
        \label{eq:A1}
  \end{equation}

 We use~\eqref{R2} to separate the optimization variables in the first term on the right-hand side of~\eqref{eq:A1} as
   \begin{equation}
  \begin{split}
       &  2 \Re \{ {\mathbf{x}_0^\textsc{T}} \mathbf{A} \mathbf{B}  {\mathbf{Y}_0}   \mathbf{Y}^\textsc{H} \mathbf{B}^\textsc{H} \mathbf{A}^* \mathbf{x}^*\}  = \frac{1}{2} \bigg( \| \mathbf{Y} {\mathbf{Y}_0^\textsc{H}}  \mathbf{B}^\textsc{H} \mathbf{A}^* {\mathbf{x}_0^*} \\ & +  \mathbf{B}^\textsc{H} \mathbf{A}^* \mathbf{x}^*\|_2^2  - \| \mathbf{Y} {\mathbf{Y}_0^\textsc{H}}  \mathbf{B}^\textsc{H} \mathbf{A}^* {\mathbf{x}_0^*} -  \mathbf{B}^\textsc{H} \mathbf{A}^* \mathbf{x}^*\|_2^2  \bigg).
        \end{split}
        \label{eq:A2}
  \end{equation}
  We use~\eqref{R1} to obtain a linear lower bound with respect to $\mathbf{x}$ and $\mathbf{Y}$ in the first term on the right-hand side of~\eqref{eq:A2} as
  \begin{equation}
  \begin{split}
       & \frac{1}{2} \| \mathbf{Y} {\mathbf{Y}_0^\textsc{H}}  \mathbf{B}^\textsc{H} \mathbf{A}^* {\mathbf{x}_0^*} +  \mathbf{B}^\textsc{H} \mathbf{A}^* \mathbf{x}^*\|_2^2  \geq \Re \bigl\{ ( \mathbf{Y}_0 {\mathbf{Y}_0^\textsc{H}} \mathbf{B}^\textsc{H} \mathbf{A}^* {\mathbf{x}_0^*}  \\
       & + \mathbf{B}^\textsc{H} \mathbf{A}^* {\mathbf{x}_0^*}  )^\textsc{H}  \big(  \mathbf{Y} {\mathbf{Y}_0} ^\textsc{H} \mathbf{B}^\textsc{H} \mathbf{A}^* {\mathbf{x}_0}^* +  \mathbf{B}^\textsc{H} \mathbf{A}^* \mathbf{x}^* \big) \bigl\} \\
       & - \frac{1}{2} \| \mathbf{Y}_0 {\mathbf{Y}_0^\textsc{H}} \mathbf{B}^\textsc{H} \mathbf{A}^* {\mathbf{x}_0^*} +  \mathbf{B}^\textsc{H} \mathbf{A}^* {\mathbf{x}_0^*} \|_2^2
         \end{split}
         \label{eq:A3}
  \end{equation}
  By combining~\eqref{eq:A1},~\eqref{eq:A2}, and~\eqref{eq:A3}, we obtain the lower bound in~\eqref{eq:Lemm1}.

 \section{Proof of Lemma~\ref{lemma:2}} \label{AppndB}
 We start by finding an upper bound on the square of the magnitudes of the real and imaginary parts of $  \mathbf{a}^\text{T} \mathbf{X} \mathbf{B} \mathbf{y} $ as
\begin{subequations}
\begin{align}
     (\Re \{ \mathbf{x}^\text{T} \mathbf{A} \mathbf{B} \mathbf{y} \})^2 \leq \rho^2 \label{L1}, \\
     (\Im \{ \mathbf{x}^\text{T} \mathbf{A} \mathbf{B} \mathbf{y} \})^2 \leq \kappa^2 \label{L2}.
\end{align}
\end{subequations}
 This indicates that
 \begin{equation}
     | \mathbf{a}^\text{T} \mathbf{X} \mathbf{B} \mathbf{y} |^2 \leq \rho^2 + \kappa^2.
 \end{equation}
 Then from~\eqref{L1}, $\rho$ should be greater than or equal to both $\Re \{ \mathbf{x}^\text{T} \mathbf{A} \mathbf{B} \mathbf{y} \}$ and $-\Re \{ \mathbf{x}^\text{T} \mathbf{A} \mathbf{B} \mathbf{y} \}$. Similarly, from~\eqref{L2}, $\kappa$ should be greater than or equal to both $\Im \{ \mathbf{x}^\text{T} \mathbf{A} \mathbf{B} \mathbf{y} \}$ and $-\Im \{ \mathbf{x}^\text{T} \mathbf{A} \mathbf{B} \mathbf{y} \}$. We then use~\eqref{R2} and~\eqref{R3} to obtain the following conditions on $\rho$ and $\kappa$:
\begin{subequations}
\label{eq:tau_omega} 
\begin{align}
            \rho & \geq \frac{1}{4} \big( \| \mathbf{x}^* + \mathbf{A}\mathbf{B}\mathbf{y} \|_2^2 - \| \mathbf{x}^* - \mathbf{A}\mathbf{B}\mathbf{y} \|_2^2 \big), \label{eq:tau1} \\
            \rho & \geq \frac{1}{4} \big( \| \mathbf{x}^* - \mathbf{A}\mathbf{B}\mathbf{y} \|_2^2 - \| \mathbf{x}^* + \mathbf{A}\mathbf{B}\mathbf{y} \|_2^2 \big), \label{eq:tau2} \\
            \kappa & \geq \frac{1}{4} \big( \| \mathbf{x}^* - j\mathbf{A}\mathbf{B}\mathbf{y} \|_2^2 - \| \mathbf{x}^* + j\mathbf{A}\mathbf{B}\mathbf{y} \|_2^2 \big), \label{eq:omega1} \\
            \kappa & \geq \frac{1}{4} \big(\| \mathbf{x}^* + j\mathbf{A}\mathbf{B}\mathbf{y} \|_2^2 - \| \mathbf{x}^* - j\mathbf{A}\mathbf{B}\mathbf{y} \|_2^2 \big). \label{eq:omega2}
\end{align} 
\end{subequations}%
The second term on right-hand side of~\eqref{eq:tau1} is still concave. To tackle this, we use~\eqref{R1} as follows:
\begin{equation}
\begin{split}
    & \frac{1}{4} \| \mathbf{x}^* - \mathbf{A} \mathbf{B}\mathbf{y} \|_2^2  \geq \frac{1}{2} \Re \{ ({{\mathbf{x}}_0^*} - \mathbf{A} \mathbf{B}\mathbf{y}_0^\textsc{H}) (\mathbf{x}^*  \\
    & - \mathbf{A} \mathbf{B}\mathbf{y}) \}  - \frac{1}{4} \| {{\mathbf{x}}_0^*} - \mathbf{A} \mathbf{B}\mathbf{y}_0 \|_2^2.
    \end{split}
    \label{eq:B3}
\end{equation}
Then, by combining~\eqref{eq:tau1} and~\eqref{eq:B3}, we can obtain the expression~\eqref{eq:in1}. In a similar manner, we can use~\eqref{R1} to tackle the concave terms in~\eqref{eq:tau2},~\eqref{eq:omega1} and~\eqref{eq:omega2} to obtain~\eqref{eq:in2},~\eqref{eq:in3} and~\eqref{eq:in4}, respectively.

\bibliographystyle{IEEEtran}
\bibliography{Bibliography}
\end{document}